\newtheorem{theorem}{Theorem}[section]
\newtheorem{corollary}{Corollary}[section]
\newtheorem{remark}{Remark}[section]
\newcommand{\be}{\begin{equation}}
\newcommand{\ee}{\end{equation}}
\newcommand{\bea}{\begin{eqnarray}}
\newcommand{\eea}{\end{eqnarray}}
\newcommand{\eeas}{\end{eqnarray*}}
\newcommand{\beas}{\begin{eqnarray*}}
\begin{document}

\title{ALMOST PSEUDO-RICCI SYMMETRIC SPACETIME SOLUTIONS IN $F(R)$-GRAVITY}
\author{Avik De \and Tee-How Loo}
\address{A. De\\
Department of Mathematical and Actuarial Sciences\\
Universiti Tunku Abdul Rahman\\
Jalan Sungai Long\\
43000 Cheras\\
Malaysia}
\email{de.math@gmail.com}
\address{T. H. Loo\\
Institute of Mathematical Sciences\\
University of Malaya\\
50603 Kuala Lumpur\\
Malaysia}
\email{looth@um.edu.my}
\date{}

\thanks{The authors were supported in part by the FRGS research grant (Grant No. FP137-2019A)}

\begin{abstract}
The objective of the present paper is to study 4-dimensional almost pseudo Ricci symmetric perfect fluid spacetimes  $(APRS)_4$. We show that a Robertson-Walker spacetime is $(APRS)_4$ and vice versa under certain condition imposed on its scale factor. Some popular toy models of $F(R)$-gravity are also studied under the current setting and various energy conditions are investigated.
\end{abstract}

\maketitle


\section{Introduction}

A semi-Riemannian manifold is said to be locally symmetric if the curvature tensor is parallel in the sense that $\nabla_r R_{ijkl}=0$. 
After E. Cartan completely classified all locally symmetric Riemannian spaces   \cite{cartan}, the notion has been weakened by many authors to different extent such as conformally symmetric manifolds  \cite{mccgupta}, recurrent manifolds  \cite{walker}, conformally recurrent manifolds \cite{adatimiza}, pseudo symmetric manifolds \cite{mcc87}, weakly symmetric manifolds \cite{tamsbin},	weakly Ricci symmetric manifolds \cite{tamsbin93} etc. 
Later Chaki \cite{chaki} introduced a pseudo-Ricci symmetric manifold $(PRS)_n$ as a non-flat semi-Riemannian manifold whose Ricci tensor $R_{ij}$ of type $(0, 2)$ is not identically zero and satisfies the condition
\[
\nabla_iR_{jk}=2A_iR_{jk}+A_jR_{ki}+A_kR_{ij},
\]
where $A_i$ is a non-zero 1-form. 

When $A_i$ vanishes, we obtain a Ricci-symmetric manifold, that is, the Ricci tensor satisfies 
$\nabla_iR_{jk}=0$.
 With its several interaction with general relativistic works, Chaki and Kawaguchi motivated to generalize further the concept and introduced an almost pseudo-Ricci symmetric manifold $(APRS)_n$ \cite{aaa}. 
A non-flat semi-Riemannian manifold is called an $(APRS)_n$ if its Ricci tensor $R_{ij}$
is not identically zero and satisfies the condition 
\be \label{aprs}
\nabla_iR_{jk}=(A_i+B_i)R_{jk}+A_jR_{ki}+A_kR_{ij},
\ee
where $A_i$ and $B_i$ are two 1-forms called associated 1-forms.
If $A_i=B_i$, an $(APRS)_n$ reduces to a pseudo-Ricci symmetric
manifold, making $(PRS)_n$ a particular case of $(APRS)_n$. Several authors studied $(APRS)_n$ in different settings, 
see \cite{avik}, \cite{ucd1}, \cite{ucd2} etc and the references therein.  

A Lorentzian manifold is said to be an almost pseudo Ricci symmetric spacetime \cite{deghosh} if the Ricci tensor $R_{ij}$ satisfies (\ref{aprs}). 
Bektas et al. recently investigated a perfect fluid $(APRS)_4$ spacetime solution of Einstein's field equations without cosmological constant, where the four-velocity vector field $u_i=A_i$ \cite{ozen} which motivated us towards the present study. 

The Einstein's field equations (EFE)
\[
R_{ij}-\frac{R}{2}g_{ij}=\kappa^2 T_{ij},
\] 
are unable to explain the late time inflation of universe without assuming the existence of some yet undetected components abbreviated as dark energy. 
Here 
		$\kappa^2=8\pi G$, 
		$G$ is the Newton's gravitational constant, 
		$R$ is the Ricci scalar  and 
		$T_{ij}$ is the energy momentum tensor describing the matter content of the spacetime. 

Some researchers started believing that EFE might break down at a large scale and tried to modify it to get some higher order field equations of gravity.
One of these modified gravity theories is obtained by replacing the Ricci scalar $R$ in the Einstein-Hilbert action with an arbitrary function $F(R)$ of $R$. 
The $F(R)$ theories, despite being the simplest generalization, are believed to be the unique one among higher-order gravity theories which can avoid the long known and fatal Ostrogradski instability \cite{sotiriou}. 
Of course the viability of such functions are constrained by several observational data and scalar-tensor theoretical results. 
However, we have certain functions which could be able to explain the whole cosmic history of the universe starting from the early accelerated expansion, decelerated expansion and the late time accelerated expansion. 
Additionally we can always propose some phenomenological assumption  about the form of the function $F(R)$ and later verify its validity from the present viability criteria.

The matter content in the gravitational field equations is more often than not assumed to be a perfect fluid continuum having density, pressure and possessing dynamical and kinematical quantities like velocity, acceleration, vorticity, shear and expansion, in which case the energy momentum tensor $T_{ij}$ of type $(0,2)$ is given by 
\begin{align}\label{eqn:T}
T_{ij}=pg_{ij}+(\sigma + p)u_iu_j,
\end{align}
where $\sigma $ and $p$ are the energy density and the isotropic pressure respectively, 
the timelike $u^i$ is the velocity vector field of the fluid. 
$\theta=\nabla_iu^i$ is termed as the expansion scalar of the fluid, 
$\dot u^l=u^i\nabla_iu^l$ is the acceleration vector, 
\[
s_{jl}=\frac{1}{2}h_j^ih_l^r(\nabla_iu_r+\nabla_ru_i)-\frac{\theta}{3}h_{jl}
\]
is the shear tensor, where $h_{jl}=g_{jl}+u_j u_l$ is the orthogonal projector. 
	Physically, the expansion measures the rate of change of the volume of a fluid element per unit volume, and the shear tensor measures the shearing deformation of a fluid element. The vorticity which measures the local rotation of the fluid is given by 
$$\Omega_{jl}=\frac{1}{2}h_j^ih_l^r(\nabla_iu_r-\nabla_ru_i),$$ 
mathematically it is equivalent to the curl of the velocity vector field $u^i$ of the fluid.

In addition, $p$ and $\sigma $ are related by an equation of state governing the particular sort of perfect fluid under consideration. 
In general, this is an equation of the form $p=p(\sigma )$. 
In this case, the perfect fluid is called isentropic. 
Moreover, if $p=\sigma$, then the perfect fluid is termed as stiff matter. 
A stiff matter equation of   state was   first introduced by Zeldovich in \cite{z1} and used in his cosmological model in which the  primordial universe is assumed to be a cold gas of baryons \cite{z2}. 
The stiff matter era preceded the radiation era with $p=\frac{\sigma}{3}$, the dust matter era with $p=0$ and followed by the dark matter era with $p=-\sigma$ \cite{c2}.

The present paper is organized as follows: After the introduction, in Section 2 an equivalent condition for a Robertson-Walker spacetime to be an $(APRS)_4$ is deduced. Next, we study almost pseudo Ricci symmetric spacetimes with constant Ricci scalar which satisfy $F(R)$-gravity equations. In the next section we discuss the energy conditions in such a setting, followed by some toy models of $F(R)$-gravity investigated in $(APRS)_4$ with constant $R$. We close the study with a discussion. 


\section{Robertson-Walker spacetime as an $(APRS)_4$}

The current favored model of our universe is spatially homogeneous and isotropic or mathematically speaking, a warped product $\mathbb{R}\times_{a(t)}M^3$, popularly known as a Robertson-Walker (RW) spacetime. Here the manifold $M^3$, in general, is a space form of curvature $-1,\, 0$ or $1$ but recent observational data convince us of a spatially flat universe $k=0$ case. The function $a(t)$ is called the scale factor of the universe and $\frac{\dot{a}}{a}=H$ the Hubble parameter. In this section we show that the RW spacetime, is almost pseudo Ricci symmetric if and only if the scale factor $a(t)$ satisfies certain conditions. 

The line element and the Ricci scalar in a spatially flat RW spacetime are respectively given by
\begin{align*}
ds^2=-dt^2+a^2(t)\left(dr^2+r^2d\theta^2+r^2\sin^2\theta d\phi^2\right),
\end{align*}
\begin{align*}
R=6 \frac{a\ddot a+\dot a^2}{a^2}.
\end{align*}

The Ricci tensor takes the form
\begin{align}\label{eqn:7a}
R_{jl}=(P-Q)u_ju_l+Pg_{jl}=-Qu_ju_l+Ph_{jl}
\end{align}
where 
\begin{align}\label{eqn:2a}
P=\frac{a\ddot a+2\dot a^2}{a^2}, \quad Q=3\frac{\ddot a}{a} 
\end{align}
and  $u^i=(\partial_t)^i$ is the four-velocity of the fluid with $u^ju_j=-1$ and  
\begin{align}\label{eqn:3}
\nabla_ju_l=&\frac{\dot a}ah_{jl}.   
\end{align}
It is clear from (\ref{eqn:2a}) that 
\begin{align}\label{eqn:7b}
\nabla_iP=&-u_i\dot P, \quad \nabla_iQ=-u_i\dot Q.
\end{align} 
Taking covariant derivative on (\ref{eqn:7a}), with the help of (\ref{eqn:3})--(\ref{eqn:7b}) we obtain
\begin{align}\label{eqn:9a}
\nabla_iR_{jl} 
=&(\nabla_iP-\nabla_iQ)u_ju_l+(P-Q)\{\nabla_iu_ju_l+\nabla_iu_lu_j\}+\nabla_iPg_{jl} \notag \\
=&\dot Qu_iu_ju_l+(P-Q)\frac{\dot a}a\{h_{ij}u_l+h_{il}u_j\}-\dot Pu_ih_{jl}.	  
\end{align}

Now let us further assume that it is an $(APRS)_4$ spacetime. By (\ref{aprs}) and (\ref{eqn:7a}), we obtain
\begin{align}\label{eqn:9b}
\nabla_iR_{jl}
=& -Q\{(A_i+B_i)u_ju_l+A_ju_iu_l+A_lu_iu_j \} \notag\\
 & +P\{(A_i+B_i)h_{jl}+A_jh_{il}+A_lh_{ij} \} .
\end{align}
By comparing (\ref{eqn:9a})--(\ref{eqn:9b}), we have
\begin{align}\label{eqn:10}	
\dot Qu_iu_ju_l&+Q\{(A_i+B_i)u_ju_l+A_ju_iu_l+A_lu_iu_j\} \notag\\
&=X_ih_{jl}+Y_jh_{il}+Y_lh_{ij} 
\end{align}
where $X_i=P(A_i+B_i)+\dot Pu_i$ and  
$Y_j=PA_j-(P-Q)({\dot a}/a)u_j$.  
Let $h^{jl}=g^{jl}+u^ju^l$. Then
\[
h_{ij}h^{jl}=h_i^l=\delta_i^l+u_iu^l.
\]
Transvecting (\ref{eqn:10}) with $h^{jl}$, we have
\begin{align}\label{eqn:12}
3X_i+2Y_jh_{i}^j=0. 
\end{align}
Transvecting with $u^i$, we have $3X_iu^i=0$. Similarly, we have $Y_iu^i=0$.
It follows that (\ref{eqn:12}) becomes
\begin{align*}
3X_i+2Y_i=0. 
\end{align*}

Similarly, we obtain $2X_i+3Y_i=0.$ Solving these two equations gives
$X_i=Y_i=0$ or 
\begin{align}\label{eqn:20}
P(A_i+B_i)=-\dot Pu_i; \quad 		
PA_i=(P-Q)\frac{\dot a}au_i.
\end{align}
If $P=0$, then $Q=0$ by (\ref{eqn:20}).
This case is infeasible due to physical constraints. 
Hence we assume that $P\neq 0$. It follows from (\ref{eqn:20}) that $P-Q\neq0$.   
Further, the equation (\ref{eqn:10}) is simplified as 
\[\left(\dot Q-Q\frac{\dot P}P+2Q\frac{P-Q}{P}\frac{\dot a}a\right)u_iu_ju_l=0.\]	
So
\begin{align}\label{eqn:30}
P\dot Q-Q{\dot P}-2Q(Q-P)\frac{\dot a}a=0.
\end{align}

A RW spacetime is an $(APRS)_4$ spacetime if and only if the conditions
(\ref{eqn:20})--(\ref{eqn:30}) are satisfied.
Due to physical considerations, we assume  that $Q\neq0$.
Hence 
\begin{align*}
0=&\frac{1}{Q^2a^2}\left(P\dot Q-Q\dot P-2Q(Q-P)\frac{\dot a}a\right)
			=\frac{1}{Q^2a^2}\frac d{dt}\left(\frac{Q-P}{Qa^2}\right).
\end{align*}	
Since $P-Q\neq0$, we have 
\begin{align}\label{a''-1}
\frac{a\ddot a-\dot a^2}{a^3\ddot a}=-\frac1\epsilon
\end{align}
where $\epsilon\neq 0$ is a constant.
This is a second order DE. We shall transform it into a first order DE.
To do this, we first transform this equation into
\[
0=\frac{a^3}{2\dot a}\left(2\dot a\ddot a\frac{1+\epsilon^{-1}a^2}{a^2}-\dot a^2\frac{2\dot a}{a^3}
\right)
=\frac1\epsilon\frac{a^3}{2\dot a}\frac{d}{dt}\left(\dot a^2\frac{\epsilon+a^2}{a^2}\right).
\]
Hence we obtain 
\begin{align}\label{a'a'-1}
\frac{\dot a^2}{a^2}(\epsilon+a^2)=\psi
\end{align}
where $\psi\neq0$ is a constant. Furthermore, (\ref{a''-1})--(\ref{a'a'-1}) imply that 
\begin{align}\label{a''}
\frac{\dot a^2}{a^2}=\frac{\psi}{\epsilon+a^2}; \quad 
\frac{\ddot a}a
=\frac{\psi}{\epsilon+a^2}\frac{\epsilon}{\epsilon+a^2}.
\end{align}
By applying (\ref{a''}) to (\ref{eqn:2a}), we compute
\begin{align*}
p=&\frac{\psi}{\epsilon+a^2}\left\{\frac{\epsilon}{\epsilon+a^2}+2\right\} \\
p'=&\frac{-2a\dot a}{\epsilon+a^2}\frac{\psi}{\epsilon+a^2}\left\{\frac{2\epsilon}{\epsilon+a^2}+2\right\}	\\
p-q=&\frac{\psi}{\epsilon+a^2}\frac{2a^2}{\epsilon+a^2}.
\end{align*}
Noticing also that (\ref{a'a'-1}) gives
\begin{align*}
\frac{\dot a}{a}\sqrt{|\epsilon+a^2|}=\epsilon_1
\end{align*}
where $\epsilon_1$ is a constant with $\epsilon_1^2=|\psi|$.
By substituting these equations into (\ref{eqn:20}), we obtain
\begin{align*}
A_i=\frac{2\epsilon_1a^2}{3\epsilon+2a^2}\frac{1}{\sqrt{|\epsilon+a^2|}}u_i; \quad 
B_i=\frac{2\epsilon_1a^2}{3\epsilon+2a^2}\frac{1}{\sqrt{|\epsilon+a^2|}}\frac{3\epsilon+a^2}{\epsilon+a^2}u_i.
\end{align*}
Finally we consider three cases:
\begin{enumerate}
	\item[(a)] When $\epsilon>0$. Let $\epsilon=\epsilon_0^2$ with $\epsilon_0>0$. Then 
	\[
  \frac{\dot a}{a}\sqrt{\epsilon_0^2+a^2}=\epsilon_1
	.
   \]
	Furthermore, by the integration formula, we obtain
		\[
		\sqrt{\epsilon_0^2+a^2}-\epsilon_0\ln(\epsilon_0+\sqrt{\epsilon_0^2+a^2})+\epsilon_0\ln a=\epsilon_1 t+\epsilon_2.
		\]
	\item[(b)] When $\epsilon<0$ and $\epsilon+a^2<0$. Let $\epsilon=-\epsilon_0^2$ with $\epsilon_0>0$. 
	Then $|\epsilon+a^2|=\epsilon_0^2-a^2$ and so 
	\[
  \frac{\dot a}{a}\sqrt{\epsilon_0^2-a^2}=\epsilon_1 
	.
   \]
	Furthermore, by the integration formula, we obtain
		\[
		\sqrt{\epsilon_0^2-a^2}-\epsilon_0\ln(\epsilon_0+\sqrt{\epsilon_0^2-a^2})+\epsilon_0\ln a=\epsilon_1 t+\epsilon_2.
		\]	
	\item[(c)] When $\epsilon<0$ and  $\epsilon+a^2>0$. Let $\epsilon=-\epsilon_0^2$ with $\epsilon_0>0$. 
	Then $|\epsilon+a^2|=a^2-\epsilon_0^2$ and so 
	\[
  \frac{\dot a}{a}\sqrt{a^2-\epsilon_0^2}=\epsilon_1 
	.
   \]
	Furthermore, by the integration formula, we obtain
		\[
		\sqrt{a^2-\epsilon_0^2}-\epsilon_0\sec^{-1}\frac{a}{\epsilon_0}=\epsilon_1 t+\epsilon_2.
		\]	
\end{enumerate}
Thus we have the following:
\begin{theorem}
A spatially flat RW spacetime is an $(APRS)_4$ spacetime if and only if either 
\begin{enumerate}
	\item[(a)] the scale factor $a(t)$ satisfies the implicit equation:
	\[
	\sqrt{\epsilon_0^2+a^2}-\epsilon_0\ln(\epsilon_0+\sqrt{\epsilon_0^2+a^2})+\epsilon_0\ln a=\epsilon_1 t+\epsilon_2
	\]
	and the associated $1$-forms are given respectively by
	\begin{align*}
	A_i=\frac{2\epsilon_1a^2}{3\epsilon_0^2+2a^2}\frac{1}{\sqrt{\epsilon_0^2+a^2}}u_i; \quad 
	B_i=\frac{2\epsilon_1a^2}{3\epsilon_0^2+2a^2}\frac{3\epsilon_0^2+a^2}{(\epsilon_0^2+a^2)^{3/2}}u_i
	\end{align*}
	where $\epsilon_0>0$, $\epsilon_1$ and $\epsilon_2$ are constants; or 
	\item[(b)] the scale factor $a(t)$ satisfies the implicit equation:
	\[
	\sqrt{\epsilon_0^2-a^2}-\epsilon_0\ln(\epsilon_0+\sqrt{\epsilon_0^2-a^2})+\epsilon_0\ln a=\epsilon_1 t+\epsilon_2
	\]
	and the associated $1$-forms are given respectively by
	\begin{align*}
	A_i=\frac{2\epsilon_1a^2}{2a^2-3\epsilon_0^2}\frac{1}{\sqrt{\epsilon_0^2-a^2}}u_i; \quad 
	B_i=\frac{2\epsilon_1a^2}{2a^2-3\epsilon_0^2}\frac{3\epsilon_0^2-a^2}{(\epsilon_0^2-a^2)^{3/2}}u_i
	\end{align*}
	where $\epsilon_0>0$, $\epsilon_1$  and $\epsilon_2$ are constants; or 
	\item[(c)] the scale factor $a(t)$ satisfies the implicit equation:
	\[
	\sqrt{a^2-\epsilon_0^2}-\epsilon_0\sec^{-1}\frac{a}{\epsilon_0}=\epsilon_1 t+\epsilon_2
	\]
	and the associated $1$-forms are given respectively by
	\begin{align*}
	A_i=\frac{2\epsilon_1a^2}{2a^2-3\epsilon_0^2}\frac{1}{\sqrt{a^2-\epsilon_0^2}}u_i; \quad 
	B_i=\frac{2\epsilon_1a^2}{2a^2-3\epsilon_0^2}\frac{a^2-3\epsilon_0^2}{(a^2-\epsilon_0^2)^{3/2}}u_i
	\end{align*}
	where $\epsilon_0>0$, $\epsilon_1$  and $\epsilon_2$ are constants. 
\end{enumerate}
\end{theorem}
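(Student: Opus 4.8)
The plan is to establish both directions by turning the $(APRS)_4$ condition into a single ordinary differential equation for the scale factor and then integrating it by quadrature. First I would write the Ricci tensor of the spatially flat RW metric in the perfect-fluid form $R_{jl}=-Qu_ju_l+Ph_{jl}$ of \eqref{eqn:7a}, with $P,Q$ as in \eqref{eqn:2a}, and record the kinematic identities $\nabla_ju_l=(\dot a/a)h_{jl}$ and $\nabla_iP=-\dot Pu_i$, $\nabla_iQ=-\dot Qu_i$ from \eqref{eqn:3}--\eqref{eqn:7b}. Covariantly differentiating \eqref{eqn:7a} and substituting these identities gives the explicit expression \eqref{eqn:9a} for $\nabla_iR_{jl}$ purely in terms of $u$, $h$ and the scalars $P,Q,\dot P,\dot Q$.

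For the forward implication I would impose the defining relation \eqref{aprs}; applied to the Ricci tensor \eqref{eqn:7a} it yields the second form \eqref{eqn:9b} of $\nabla_iR_{jl}$ carrying the unknown $1$-forms $A_i,B_i$. Equating \eqref{eqn:9a} with \eqref{eqn:9b} and regrouping the purely timelike $u_iu_ju_l$ term against the $h$-terms produces \eqref{eqn:10}. The decisive algebraic manoeuvre is to extract the vectorial content: transvecting \eqref{eqn:10} with the projector $h^{jl}$ annihilates every $u$-laden term and leaves $3X_i+2Y_jh_i^j=0$, while contracting with $u^i$ shows $X_iu^i=Y_iu^i=0$, so that $Y_jh_i^j=Y_i$; a second, independent transvection supplies the companion relation $2X_i+3Y_i=0$, and solving the pair forces $X_i=Y_i=0$. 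Reading off the definitions of $X_i,Y_i$ then gives the two constraints \eqref{eqn:20} expressing $A_i+B_i$ and $A_i$ as multiples of $u_i$, while the leftover coefficient of $u_iu_ju_l$ in \eqref{eqn:10} collapses to the scalar identity \eqref{eqn:30}.

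The heart of the argument, and the step I expect to be the main obstacle, is the integration of \eqref{eqn:30}. Under the physically motivated nondegeneracy $P\neq0$, $Q\neq0$, $P-Q\neq0$ (the excluded case $P=0$ would force $Q=0$ by \eqref{eqn:20}), the key observation is that $\tfrac{1}{Q^2a^2}\bigl(P\dot Q-Q\dot P-2Q(Q-P)(\dot a/a)\bigr)$ is exactly $\tfrac{d}{dt}\bigl((Q-P)/(Qa^2)\bigr)$; integrating once yields the second-order equation \eqref{a''-1}, and a further integrating-factor rearrangement recognizes the left side as the derivative of $\dot a^2(\epsilon+a^2)/a^2$, producing the first-order relation \eqref{a'a'-1}. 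Spotting these two exact combinations is the only genuinely non-routine move; everything afterwards is computation. From \eqref{a'a'-1} I would extract \eqref{a''} for $\dot a^2/a^2$ and $\ddot a/a$, substitute back into \eqref{eqn:2a} and \eqref{eqn:20} to obtain the stated $A_i,B_i$, and evaluate the quadrature $\int \tfrac{\sqrt{|\epsilon+a^2|}}{a}\,da=\epsilon_1 t+\epsilon_2$ in the three regimes $\epsilon>0$, then $\epsilon<0$ with $\epsilon+a^2<0$, and finally $\epsilon<0$ with $\epsilon+a^2>0$, using the standard logarithmic and inverse-secant integral formulas to arrive at the three implicit scale-factor equations. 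For the converse I would simply reverse this chain: given a scale factor satisfying any one of the three relations together with the accompanying $A_i,B_i$, substituting \eqref{a''} into \eqref{eqn:9a} and into \eqref{eqn:9b} shows the two expressions coincide, so the RW spacetime is indeed an $(APRS)_4$.
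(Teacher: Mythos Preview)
Your proposal is correct and follows essentially the same route as the paper: it derives \eqref{eqn:9a} and \eqref{eqn:9b}, equates them to obtain \eqref{eqn:10}, extracts $X_i=Y_i=0$ by projector transvections, reduces to the scalar ODE \eqref{eqn:30}, and then performs the same two exact-derivative recognitions leading to \eqref{a''-1} and \eqref{a'a'-1} before splitting into the three sign cases. The only step left implicit is which ``second transvection'' yields the companion linear relation, but any independent contraction (e.g.\ with $h^{ij}$) produces a second linear equation in $X,Y$ with the same consequence $X_i=Y_i=0$, so the argument goes through.
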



\section{$(APRS)_4$ satisfying $F(R)$-gravity}

In an $(APRS)_4$ the covariant derivative of the Ricci tensor satisfies (\ref{aprs}). Hence we have
\[
\nabla_iR_{jk}-\nabla_kR_{ij}=B_iR_{jk}-B_kR_{ij},
\]
which on contraction over $j$ and $k$ gives us 
\be\label{dr}
\nabla_iR=2RB_i-2R_{ij}B^j.
\ee
If we consider a constant Ricci scalar $R$, we get from (\ref{dr}), 
\be \label{barb}
RB_i=R_{ij}B^j.
\ee

We consider a modified Einstein-Hilbert action term 
\[S=\frac{1}{\kappa^2}\int F(R) \sqrt{-g}d^4x +\int L_m\sqrt{-g}d^4x,\]
where 
		$F(R)$ is an arbitrary function of the Ricci scalar $R$, 
		$L_m$ is the matter Lagrangian density, and we define the stress-energy tensor of matter as 
\[T_{ij}=-\frac{2}{\sqrt{-g}}\frac{\delta(\sqrt{-g}L_m)}{\delta g^{ij}}.\] 
By 	varying 	the 	action $S$ of the gravitational field with respect to the metric tensor components $g^{ij}$ and using the least action principle we obtain the $f(R)$-gravity field equations
\be \label{FR}	
F_R(R)R_{ij}-\frac{1}{2}F(R)g_{ij}+(g_{ij}\Box-\nabla_i\nabla_j)F_R(R)=\kappa^2T_{ij},
\ee
where $\Box$ represents the d'Alembertian operator and $F_R=\frac{\partial F(R)}{\partial R}$. 
Einstein's field equations can be reawakened by putting $F(R)=R$. 

The trace of (\ref{FR}) gives 
\beas 3\Box F_R(R)+RF_R(R)-2F(R)=\kappa^2 T,\eeas
which we can rewrite as 
\[
\Box F_R(R)=\frac{\partial V^{\text{eff}}}{\partial F_R(R)}.
\]
On the critical points, the effective potential $V^{\text{eff}}$ has a maximum (or minimum), so that 
\beas \Box F_R(R_{CP})=0\eeas 
and 
\beas 2F(R_{\text{CP}})-R_{\text{CP}}F_R(R_{\text{CP}})=-\kappa^2T.\eeas
Here, $R_{\text{CP}}$ is the curvature at the critical point.
 For example, in absence of matter, i.e., $T=0$, one has the de Sitter critical point associated with a constant Ricci scalar $R_{\text{dS}}$. For a constant Ricci scalar, we can express the above field equations (\ref{FR}) as follows:
\be \label{fr}
R_{ij}-\frac{R}{2}g_{ij}=\frac{\kappa^2}{F_R(R)}T^{\text{eff}}_{ij},
\ee
where 
$$T^{\text{eff}}_{ij}=T_{ij}+\frac{F(R)-RF_R(R)}{2\kappa^2}g_{ij}.$$ 
Remembering the term $\kappa^2=8\pi G$, the quantity $G^{\text{eff}}=\frac{G}{F_R(R)}$ can be regarded as the effective gravitational coupling strength in analogy to what is done in Brans-Dicke type scalar-tensor gravity theories and further the positivity of $G^{\text{eff}}$ (equivalent to the requirement that the graviton is not a ghost) imposes that the effective scalar degree of freedom or the scalaron term $f_R(R) > 0$.

In (\cite{CMM1},\cite{CMM2},\cite{CMM3}) cosmological perfect fluid case is considered in various gravity theories. Motivated by these studies we consider a perfect fluid spacetime that satisfies (\ref{fr}). Hence using (\ref{eqn:T}) we have 
\be \label{a}
R_{ij}=\frac{\kappa^2(p+\sigma)}{F_R(R)}u_iu_j+\frac{2\kappa^2p+F(R)}{2F_R(R)}g_{ij}.
\ee
It follows that 
\be \label{a1}
R_{ij}u^j=\frac{F(R)-2 \kappa^2\sigma}{2F_R(R)} u_i.
\ee
Throughout this study we consider a perfect fluid $(APRS)_4$ spacetime solution of $f(R)$-gravity equations where the 
four-velocity vector $u_i=B_i$, so that (\ref{barb}) gives $R_{ij}u^j=Ru_i$. Hence, we conclude that 
\[
R=\frac{F(R)-2 \kappa^2\sigma}{2F_R(R)}
\]
or 
\be \label{a3} 
 \sigma=\frac{F(R)-2RF_R(R)}{2\kappa^2}.
\ee
On the other hand, the trace equation of (\ref{a}) is given by 
\[R=\frac{-\kappa^2\sigma+3\kappa^2p+2F(R)}{F_R(R)}.\]
This, together with (\ref{a1}),  gives
\be \label{a4} 
p=-\frac{F(R)}{2\kappa^2}.
\ee
This leads to our first result:
\begin{theorem}\label{pfthm}
In a perfect fluid $(APRS)_4$ spacetime with constant $R$ satisfying $F(R)$-gravity; 
if the four-velocity vector $u^i=B^i$, then its isotropic pressure $p$ and energy density $\sigma$ are given by 
$p=-\frac{F(R)}{2\kappa^2}$ and $\sigma=\frac{F(R)-2RF_R(R)}{2\kappa^2}$. 
Moreover, both the pressure and density are constant in this special scenario.
\end{theorem}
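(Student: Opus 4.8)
The plan is to exhibit the timelike vector $u^i$ as an eigenvector of the Ricci tensor in two independent ways, match the resulting eigenvalues to extract the density, and then read off the pressure from the trace of the field equations.

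First I would record the form of the field equations under the two standing hypotheses. Since $R$ is constant, $F_R(R)$ is constant, so every term in (\ref{FR}) containing $\nabla_i\nabla_j F_R(R)$ or $\Box F_R(R)$ vanishes and the field equations reduce to the purely algebraic form (\ref{fr}); substituting the perfect-fluid energy-momentum tensor (\ref{eqn:T}) then gives the explicit expression (\ref{a}) for $R_{ij}$ in terms of $p$, $\sigma$, $F(R)$ and $F_R(R)$. Contracting (\ref{a}) with $u^j$ and using $u^ju_j=-1$ isolates the timelike direction and yields (\ref{a1}), namely $R_{ij}u^j=\tfrac{F(R)-2\kappa^2\sigma}{2F_R(R)}u_i$; this is the first expression of $u^i$ as a Ricci eigenvector.

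The key step is to produce a second, geometrically-sourced eigenvalue for $u^i$ from the $(APRS)_4$ structure. For constant $R$ the identity (\ref{dr}) collapses to (\ref{barb}), i.e. $RB_i=R_{ij}B^j$; invoking the hypothesis $u^i=B^i$ rewrites this as $R_{ij}u^j=Ru_i$. Comparing with (\ref{a1}) forces the two eigenvalues to coincide, $R=\tfrac{F(R)-2\kappa^2\sigma}{2F_R(R)}$, which I solve for $\sigma$ to obtain $\sigma=\tfrac{F(R)-2RF_R(R)}{2\kappa^2}$, that is (\ref{a3}).

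To recover $p$ I would take the trace of (\ref{a}) (with $g^{ij}g_{ij}=4$ and $u^iu_i=-1$), which produces a linear relation among $R$, $p$, $\sigma$, $F(R)$ and $F_R(R)$; eliminating $\sigma$ by means of the density formula (\ref{a3}) collapses this to $p=-\tfrac{F(R)}{2\kappa^2}$, i.e. (\ref{a4}). Finally, constancy is immediate: because $R$ is assumed constant, both $F(R)$ and $F_R(R)$ are constants, so the closed-form expressions for $p$ and $\sigma$ are manifestly constant throughout the spacetime. I do not anticipate a genuine obstacle, as the computation is essentially linear algebra in the two Ricci eigenvalues; the one point that must not be glossed over is that the $(APRS)_4$ hypothesis enters only through the constant-$R$ reduction (\ref{barb}), and that the identification $u^i=B^i$ is precisely what converts that one-form identity into the clean eigenvalue statement $R_{ij}u^j=Ru_i$ which drives the entire argument.
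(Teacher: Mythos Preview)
Your proposal is correct and follows essentially the same route as the paper: reduce the field equations to (\ref{a}) using constancy of $R$, contract with $u^j$ to obtain one Ricci eigenvalue for $u^i$, use the $(APRS)_4$ identity (\ref{barb}) together with $u^i=B^i$ to obtain the second eigenvalue $R$, equate them to get $\sigma$, and then extract $p$ from the trace of (\ref{a}). The constancy observation at the end is also exactly what the paper records.
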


\begin{corollary}\label{thm:2.2}
A vacuum $(APRS)_4$ spacetime solution with constant $R$ 
and $u_i=B_i$ 
is not viable in the $F(R)$-gravity theory. 
\end{corollary}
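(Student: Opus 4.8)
The plan is to derive a contradiction from the very definition of an $(APRS)_4$ spacetime, namely that its Ricci tensor is \emph{not} identically zero. First I would convert the vacuum hypothesis into pointwise conditions on the fluid variables. For the perfect-fluid tensor (\ref{eqn:T}), the vacuum condition $T_{ij}=0$ combined with $u^iu_i=-1$ gives, upon transvecting with $u^iu^j$, that $\sigma=0$, and upon tracing with $g^{ij}$, that $3p-\sigma=0$; hence both $p=0$ and $\sigma=0$.

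Next I would invoke Theorem \ref{pfthm}. Since $p=-F(R)/(2\kappa^2)$, the condition $p=0$ forces $F(R)=0$ immediately. Likewise $\sigma=\bigl(F(R)-2RF_R(R)\bigr)/(2\kappa^2)=0$ together with $F(R)=0$ yields $RF_R(R)=0$; because the reduced field equations (\ref{fr})--(\ref{a}) are admissible only when $F_R(R)\neq0$, and viability of the model moreover demands the scalaron condition $F_R(R)>0$, this forces $R=0$. I note, however, that the essential input for the contradiction is just $F(R)=0$, not the value of $R$.

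The key step is then to substitute $p=\sigma=0$ and $F(R)=0$ back into (\ref{a}). The coefficient $\kappa^2(p+\sigma)/F_R(R)$ of $u_iu_j$ vanishes, and the coefficient $\bigl(2\kappa^2p+F(R)\bigr)/\bigl(2F_R(R)\bigr)$ of $g_{ij}$ vanishes as well, so $R_{ij}\equiv0$. This directly contradicts the defining requirement accompanying (\ref{aprs}), that the Ricci tensor of an $(APRS)_4$ spacetime be not identically zero. Therefore no vacuum $(APRS)_4$ solution with constant $R$ and $u_i=B_i$ can exist inside a viable $F(R)$-gravity model, which is exactly the asserted non-viability.

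I expect the only delicate point to be the precise reading of the hypotheses rather than any computation. One must be careful that ``vacuum'' is taken as $T_{ij}=0$ (equivalently $p=\sigma=0$ for a perfect fluid), and that the form of (\ref{a}) already presupposes $F_R(R)\neq0$, so that the vanishing of $F(R)$ cannot be rescued by a degenerate denominator. Once these are pinned down, the contradiction with the non-vanishing of $R_{ij}$ is immediate, and no integration of the scale-factor equation nor any analysis of the associated $1$-forms $A_i,B_i$ is required.
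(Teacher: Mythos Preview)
Your argument is correct and shares its core with the paper's: from the vacuum hypothesis you read off $p=\sigma=0$ and then apply Theorem~\ref{pfthm} to conclude $F(R)=0$. The paper's own proof stops right there, treating $F(R)=0$ itself as the obstruction (a vanishing gravitational Lagrangian is not a viable $F(R)$-theory). You go one step further, using $\sigma=0$ and $F_R(R)>0$ to force $R=0$ and then feeding everything back into (\ref{a}) to obtain $R_{ij}\equiv0$, which contradicts the defining non-triviality of the Ricci tensor in an $(APRS)_4$. Your extra step sharpens the conclusion by locating the inconsistency in the geometric structure rather than only in the triviality of the action, but the essential mechanism---Theorem~\ref{pfthm} applied to $p=0$---is the same.
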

\begin{proof}
For vacuum case, $T_{ij}=0$ and $p=\sigma=0$. It follows from (\ref{a4}) that $F(R)=0$.
\end{proof}

\begin{remark}
In general relativity, $F(R)=R$, so the perfect fluid represents a stiff matter $p=\sigma=-\frac{R}{2\kappa^2}$.
\end{remark}
\begin{theorem} \label{thm:2.3}
The matter content in a perfect fluid $(APRS)_4$ spacetime with constant $R$ and $u_i=B_i$ satisfying $F(R)$-gravity obeys the simple barotropic equation of state $p=\omega \sigma$ if and only if 
$F(R)=\lambda R^{(1+\omega)/2\omega}$.
\end{theorem}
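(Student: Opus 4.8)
The plan is to reduce the barotropic condition to an ordinary differential equation for $F$ and then integrate it. The key inputs are the two formulas already established in Theorem \ref{pfthm}, namely $p=-\frac{F(R)}{2\kappa^2}$ and $\sigma=\frac{F(R)-2RF_R(R)}{2\kappa^2}$, which express both the pressure and the energy density purely in terms of $F(R)$ and its derivative $F_R(R)$.

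First I would substitute these two expressions into the equation of state $p=\omega\sigma$. The common factor $1/(2\kappa^2)$ cancels, leaving the relation $-F(R)=\omega\bigl(F(R)-2RF_R(R)\bigr)$. Rearranging gathers the coefficients of $F(R)$ and produces the first-order linear ODE
\[
2\omega R\,F_R(R)=(1+\omega)F(R).
\]
This is the heart of the statement: the barotropic condition is equivalent to a differential constraint on the functional form of $F$, where $R$ is now regarded as the independent variable.

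Next I would solve this equation by separation of variables. Assuming $F\neq 0$ and $\omega\neq 0$, dividing by $2\omega R\,F$ gives $\frac{dF}{F}=\frac{1+\omega}{2\omega}\frac{dR}{R}$, which integrates to $\ln|F|=\frac{1+\omega}{2\omega}\ln|R|+\text{const}$, hence $F(R)=\lambda R^{(1+\omega)/2\omega}$ for some constant $\lambda$. For the converse, I would take $F(R)=\lambda R^{(1+\omega)/2\omega}$, compute $F_R(R)=\lambda\frac{1+\omega}{2\omega}R^{(1-\omega)/2\omega}$, insert these into the formulas for $p$ and $\sigma$, and check directly that the ratio collapses to $p/\sigma=\omega$, closing the equivalence.

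Since every step is an elementary algebraic manipulation or a separable integration, there is no genuine analytic obstacle. The only point requiring care is interpretational: although $R$ is constant on a fixed spacetime, the relation $p=\omega\sigma$ is imposed as an identity, so the resulting constraint must be read as an ODE determining the functional shape of $F$. I would also flag the excluded degenerate cases $\omega=0$ (which forces $F$ constant, i.e. $p=0$) and $F\equiv 0$ (the non-viable vacuum case of Corollary \ref{thm:2.2}) to keep the separation-of-variables step legitimate.
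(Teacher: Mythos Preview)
Your proposal is correct and follows essentially the same route as the paper: substitute the expressions for $p$ and $\sigma$ from Theorem~\ref{pfthm} into $p=\omega\sigma$, obtain the ODE $2\omega R F_R(R)=(1+\omega)F(R)$, and integrate to get $F(R)=\lambda R^{(1+\omega)/2\omega}$, with the converse immediate. Your additional remarks on the degenerate cases $\omega=0$ and $F\equiv 0$, and on the interpretational subtlety of treating $R$ as an independent variable, go beyond what the paper records but do not alter the argument.
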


\begin{proof}
Suppose $p=\omega\sigma$. It follow from (\ref{a3})-(\ref{a4}) that 
\[
2\omega RF_R(R)=(1+\omega)F(R).
\]
Solving this equation gives $F(R)=\lambda R^{(1+\omega)/2\omega}$ where $\lambda$ is a constant.
The converse is trivial.
\end{proof}
\begin{remark}Corresponding to the different states of cosmic evolution of the universe we can conclude:
\begin{itemize}
\item The perfect fluid denotes dark matter ($\omega=-1$) if $F(R)$ is a constant function of $R$ or alternately if the spacetime is scalar flat.
\item The perfect fluid denotes stiff matter ($\omega =1$) if  $F(R)$ is a constant multiple of $R$. 
\item The perfect fluid denotes radiation ($\omega=1/3$) if $F(R)$ is a constant multiple of $R^2$.
\item The perfect fluid cannot represent a dust era for any viable $F(R)$. 
\end{itemize}
\end{remark}

\begin{theorem} \label{thm:2.4}
In a perfect fluid $(APRS)_4$ spacetime with constant $R$ satisfying $F(R)$-gravity;
if the four-velocity vector $u^i=B^i$, then the fluid either has vanishing expansion scalar and acceleration vector or represents a dark matter. 
\end{theorem}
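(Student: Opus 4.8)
The plan is to exploit the twice-contracted Bianchi identity, which forces the divergence of the Ricci tensor to vanish once $R$ is constant, together with the explicit form of $R_{ij}$ already recorded in (\ref{a}).

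First, since $R$ is constant, every coefficient appearing in (\ref{a}) is constant: $F(R)$ and $F_R(R)$ are constants, and by Theorem \ref{pfthm} both $p$ and $\sigma$ are constant as well. Writing (\ref{a}) as $R_{ij}=\alpha u_iu_j+\beta g_{ij}$ with the constants $\alpha=\kappa^2(p+\sigma)/F_R(R)$ and $\beta=(2\kappa^2p+F(R))/(2F_R(R))$, I would take the covariant divergence. Because $\nabla^ig_{ij}=0$ and $\alpha,\beta$ are constant, the derivative falls entirely on the $u_iu_j$ term, giving
\[
\nabla^iR_{ij}=\alpha\left[(\nabla^iu_i)u_j+u^i\nabla_iu_j\right]=\alpha\left(\theta u_j+\dot u_j\right),
\]
where $\theta=\nabla_iu^i$ is the expansion scalar and $\dot u_j=u^i\nabla_iu_j$ the acceleration vector.

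On the other hand, the twice-contracted Bianchi identity gives $\nabla^iR_{ij}=\tfrac12\nabla_jR=0$, again because $R$ is constant. Comparing the two expressions yields the single equation $\alpha(\theta u_j+\dot u_j)=0$. The decisive step is to separate the two summands: differentiating the normalization $u^ju_j=-1$ shows $u^j\nabla_iu_j=0$ for every $i$, hence $u^j\dot u_j=0$, so $\dot u_j$ is orthogonal to the timelike vector $u_j$ while $\theta u_j$ is parallel to it. Therefore, if $\alpha\neq0$ the two orthogonal pieces must vanish separately, forcing $\theta=0$ and $\dot u_j=0$; whereas $\alpha=0$ means exactly $p+\sigma=0$, i.e. $\omega=-1$, which by the earlier remark is the dark matter equation of state.

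I expect the only genuine subtlety to be the bookkeeping that makes all coefficients in (\ref{a}) constant (using Theorem \ref{pfthm} and $F_R(R)\neq0$), so that the divergence acts solely on $u_iu_j$; once that is in place, the orthogonal decomposition of $\theta u_j+\dot u_j$ makes the dichotomy immediate. One could equally reach the same conclusion from the effective conservation law $\nabla^iT^{\text{eff}}_{ij}=0$, but proceeding directly through (\ref{a}) and the Bianchi identity is the shortest route.
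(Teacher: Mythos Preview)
Your proof is correct and follows essentially the same route as the paper: the paper takes the divergence of $T_{ij}$ and invokes the conservation law $\nabla^iT_{ij}=0$, whereas you take the divergence of $R_{ij}$ via (\ref{a}) and invoke the contracted Bianchi identity, but since (\ref{a}) differs from (\ref{eqn:T}) only by constant factors and a multiple of $g_{ij}$, the two computations coincide line by line. The orthogonal splitting of $\theta u_j+\dot u_j$ and the resulting dichotomy $p+\sigma=0$ versus $\theta=\dot u_j=0$ are exactly as in the paper.
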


\begin{proof}
By Theorem~\ref{pfthm}, the pressure and density of the perfect fluid are constant. 
Using the conservation of energy $\nabla^iT_{ij}=0$ and  (\ref{eqn:T}), we obtain
\be \label{eqn:divT}
0=(p+\sigma)\{\nabla^iu_i u_j+u_i\nabla^iu_j\}. \ee
Since $u_ju^j=-1$, $\nabla^iu_j u^j=0$. 
Contracting by $u^j$ on (\ref{eqn:divT}) gives
\[
0=-(p+\sigma)\nabla^iu_i. 
\]
Furthermore, we also have 
\[
0=(p+\sigma)u_i\nabla^iu_j. 
\]
Hence, either $p+\sigma=0$ or $\nabla^iu_i =u_i\nabla^iu_j=0$.
\end{proof}
Since a conservative vector field is always irrotational, we get the vorticity of the perfect fluid is zero.


\begin{theorem} \label{thm:2.5}
If a perfect fluid spacetime with constant $R$ satisfying $F(R)$-gravity obeys the timelike convergence condition, then $\sigma\geq\frac{F(R)}{2k^2}$.
\end{theorem}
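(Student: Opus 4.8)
The plan is to extract the stated inequality by testing the timelike convergence condition against the fluid's own four-velocity. Recall that the timelike convergence condition asserts that $R_{ij}V^iV^j\geq0$ for every timelike vector field $V^i$; in particular it must hold for the unit timelike velocity $u^i$ of the perfect fluid, which satisfies $u_iu^i=-1$. I emphasize that this is a statement about a general perfect fluid spacetime: the representation (\ref{a}) of the Ricci tensor follows purely from the constant-$R$ field equation (\ref{fr}) together with the perfect-fluid form (\ref{eqn:T}), and does not require the $(APRS)_4$ structure or the identification $u_i=B_i$ used earlier.

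First I would evaluate $R_{ij}u^iu^j$ using (\ref{a}). Contracting (\ref{a}) once with $u^j$ reproduces (\ref{a1}), namely
\[
R_{ij}u^j=\frac{F(R)-2\kappa^2\sigma}{2F_R(R)}u_i,
\]
and contracting a second time with $u^i$, using $u_iu^i=-1$, then yields
\[
R_{ij}u^iu^j=\frac{2\kappa^2\sigma-F(R)}{2F_R(R)}.
\]

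Next I would impose the timelike convergence condition at $V^i=u^i$, which forces the right-hand side above to be nonnegative. To convert this into a lower bound on $\sigma$ I would invoke the viability requirement already emphasized in the discussion following (\ref{fr}): the positivity of the effective gravitational coupling $G^{\text{eff}}=G/F_R(R)$, equivalently the no-ghost condition for the scalaron, guarantees $F_R(R)>0$. Multiplying the displayed inequality through by the positive quantity $2F_R(R)$ then gives $2\kappa^2\sigma-F(R)\geq0$, that is $\sigma\geq\frac{F(R)}{2\kappa^2}$, as claimed.

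The computation itself is routine, so there is no serious obstacle; the only point requiring genuine care is the sign of $F_R(R)$, since without $F_R(R)>0$ the inequality would reverse. I would therefore make the no-ghost assumption explicit, as it is precisely the physical condition under which the effective equation (\ref{fr}) is meaningful in the first place. Note also that the full strength of the convergence condition is not needed here: the single test vector $u^i$ already delivers the bound, so no genericity argument ranging over all timelike directions is required.
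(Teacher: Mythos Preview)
Your argument is correct and matches the paper's own proof essentially line for line: apply the timelike convergence condition to the fluid four-velocity $u^i$, use equation~(\ref{a1}) to evaluate $R_{ij}u^iu^j$, and invoke $F_R(R)>0$ to clear the denominator with the correct sign. Your additional remarks---that the $(APRS)_4$ hypothesis and the identification $u_i=B_i$ play no role here, and that only the single test vector $u^i$ is needed---are accurate observations that the paper leaves implicit.
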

\begin{proof}
$u_i$ is timelike, hence timelike convergence implies that 
$$R_{ij}u^iu^j\geq 0.$$ 
As discussed earlier, $F_R(R)>0$ to ensure attractive gravity. Therefore, from (\ref{a1}) we obtain the result. 
\end{proof}

\section{Energy conditions in an $(APRS)_4$}
Energy conditions are coordinate-invariant restrictions on the (effective) energy-momentum tensor which is useful when we explore the possibility of variety of matter sources, not necessarily only a perfect fluid continuum, which satisfy the Einstein's field equations or the modified theories of gravity and preserve the idea that energy should be positive. There are several energy conditions; some of which are obsolete these days like the trace energy condition, some are weaker and included in the other. But in general, the idea is to contract the energy momentum tensor with arbitrary timelike or lightlike vectors to produce some scalar fields. We use Theorem \ref{pfthm} to deduce some relevant energy conditions for our study. 

\begin{itemize}
\item Null energy condition \textbf{(NEC)}: 
The weakest of all, it states that $T^{\text{eff}}_{ij}l^il^j\geq0$ for all null vector $l^i$, or in non-technical terms it says that the energy density of the fields contributing to $T^{\text{eff}}_{ij},$ as measured in a natural way by any observer is never negative.
In the present context this condition gives us $RF_R(R)\leq 0.$ 

\item Weak energy condition \textbf{(WEC)}: 
It states that, $T^{\text{eff}}_{ij}t^i t^j\geq 0,$ for all timelike vectors $t^i$. 
This also implies, by continuity, the NEC. 
In the present context, considering the timelike vector $u^i$ we obtain
$RF_R(R)\leq 0$.

\item Dominant energy condition \textbf{(DEC)}: 
It states that matter flows along timelike or null world lines. 
Mathematically, $T^{\text{eff}}_{ij}t^i t^j\geq 0$ for any timelike $t^i$ together with $T^{\text{eff}}_{ij}t^i$ is not spacelike, either null or timelike. 
By continuity the property should also hold true for any null vector $l^i$. 
In the present context we obtain, $RF_R(R)\leq 0.$

\item Strong energy condition \textbf{(SEC)}:
It states that $T^{\text{eff}}_{ij}t^i t^j\geq \frac{1}{2}T^i_i t^j t_j,$ for all timelike vectors $t^i$ which after some calculations reduces to $RF_R(R)\leq 0$.
\end{itemize}

Since, $F_R(R)>0$ and we considered $R\neq 0$, we can conclude $R< 0$ from the energy conditions.
 
Finally, we should cite \cite{energy} where Curiel elaborately discussed about various energy conditions; their consequences in terms of formation of singularities, thermodynamics, black hole theories etc, and the violations by some classical fields for further insight on energy conditions and their importance.


\section{Analysis of some toy models of $F(R)$-gravity in $(APRS)_4$}

Here we consider two of the earliest toy models of $F(R)$-gravity theories to analyse our results in a perfect fluid $(APRS)_4$ with constant Ricci scalar setting and with the four-velocity vector $u^i=B^i$.  

\textbf{Case I:} $F(R)=R-\frac{\mu^4}{R}.$\\
This first model was considered by Carroll et al \cite{carroll} to explain the late-time acceleration. The equation (\ref{a}) in this case reduces to 
\[
R_{ij}=\frac{\kappa^2(p+\sigma)}{1+\mu^4/R^2}u_iu_j+\frac{2\kappa^2p+R-\mu^4/R}{2(1+\mu^4/R^2)}g_{ij},
\]
with $p=-\frac{R-\mu^4/{R}}{2\kappa^2}$ and $\sigma=-\frac{R+2\mu^4/R^2}{2k^2}$.

\textbf{Case II:} $F(R)=R+\alpha R^2$.\\
The most representative model of $R^2$ cosmology is this so called Starobinsky \cite{starobinsky} model with
the help of which inflation can be explained without a need for a scalar field. 
The Equation (\ref{a}) reduces to 
\[
R_{ij}=\frac{\kappa^2(p+\sigma)}{1+2\alpha R}u_iu_j+\frac{2\kappa^2p+R+\alpha R^2}{2(1+2\alpha R)}g_{ij},
\]
with $p=-\frac{R+\alpha R^2}{2\kappa^2}$ and $\sigma=-\frac{R+3\alpha R^2}{2\kappa^2}$.


\section{Discussion}
In the present study we investigate an almost pseudo Ricci symmetric spacetime $(APRS)_4$ in the modified gravity scenario. The current model of the universe, namely, the Robertson-Walker spacetime is shown to be almost pseudo Ricci symmetric under certain condition. We consider an $(APRS)_4$ with constant Ricci scalar satisfying the $f(R)$-gravity where the matter content of the gravity theory represents a perfect fluid with the four-velocity vector $u_i=B_i$ and find the expressions for the pressure and energy density. The fluid in this case is seen to either represents a dark matter or its expansion scalar, acceleration vector and vorticity vanish. Several energy conditions are studied in this setting, some toy models of $f(R)$-gravity is discussed. 
   

\section{Acknowledgement}The authors are grateful to the referees for their valuable suggestions towards the improvement of the paper.

\end{document}